\documentclass[onecolumn,
showpacs,
amsmath,amssymb 
]{revtex4}

\usepackage{bm}
\usepackage{graphicx}
\usepackage{amsbsy} 
\usepackage{amsmath}
\usepackage{amsfonts}
\usepackage{amsthm}
\usepackage{bm}
\usepackage{graphicx}
\usepackage{amsbsy}
\usepackage{amsmath}
\usepackage{amsfonts}
\usepackage{amsthm}
\usepackage{braket}
\usepackage{color}
\usepackage{mathrsfs}

\newcommand{\inn}[2]{{\left \langle #1 | #2 \right\rangle}}

\begin{document}

\theoremstyle{plain}
\newtheorem{theorem}{Theorem}
\newtheorem{lemma}[theorem]{Lemma}
\newtheorem{corollary}[theorem]{Corollary}
\newtheorem{conjecture}[theorem]{Conjecture}
\newtheorem{proposition}[theorem]{Proposition}

\theoremstyle{definition}
\newtheorem{definition}{Definition}

\theoremstyle{remark}
\newtheorem*{remark}{Remark}
\newtheorem{example}{Example}

\title{Quantum secret sharing and Mermin operator}

\author{Minjin Choi}
\affiliation{Department of Mathematics and Research Institute for Basic Sciences, Kyung Hee University, Seoul 02447, Korea} 

\author{Yonghae Lee}
\affiliation{Department of Mathematics and Research Institute for Basic Sciences, Kyung Hee University, Seoul 02447, Korea}

\author{Soojoon Lee}
\affiliation{Department of Mathematics and Research Institute for Basic Sciences, Kyung Hee University, Seoul 02447, Korea}

\date{\today}

\begin{abstract}
Quantum secret sharing is well known
as a method for players to share a classical secret for secret sharing 
in quantum mechanical ways.
Most of the results associated with quantum secret sharing 
are based on pure multipartite entangled states.
In reality, however, it is difficult for players to share a pure entangled state,
although they can share a state close to the state.
Thus, it is necessary to study 
how to perform the quantum secret sharing 
based on a general multipartite state.
We here present 
a quantum secret sharing protocol
on an $N$-qubit state 
close to a pure $N$-qubit Greenberger--Horne--Zeilinger state.
In our protocol, $N$ players use 
an inequality derived from the Mermin inequality
to check secure correlation of classical key bits
for secret sharing.
We show that if our inequality holds 
then every legitimate player can have key bits 
with positive key rate.
Therefore, for sufficiently many copies of the state, 
the players can securely share a classical secret 
with high probability by means of our protocol.

\end{abstract}

\pacs{03.67.Dd 
}
\maketitle

\section{Introduction}

Secret sharing~\cite{B79,S79} is a method for splitting a secret 
so that a sufficient number of shares are needed 
in order to reconstruct the secret.
As a special type of secret sharing, 
there is a $(k,n)$ threshold secret sharing scheme, 
in which a dealer allocates a secret into $n$ players
so that no group of fewer than $k$ players can restore the secret, 
but any group of $k$ or more players can.

We note that secret sharing can be regarded 
as a type of multi-user key agreement,
since we can carry out secret sharing 
by using key bits obtained from the key agreement.
Thus, in order to check the security of a given secret sharing protocol, 
it is sufficient to verify 
whether the key agreement can be securely performed.
We also note that quantum key distribution (QKD) provides us 
with unconditionally secure communication between two remote players.
More precisely, we can accomplish 
unconditionally secure key agreement 
by employing quantum mechanics.
Hence, it can be an attractive question 
to ask whether we can quantumly perform unconditionally secure secret sharing.
In fact, Hillery, Bu\v{z}ek, Berthiaume~\cite{HBB99} proposed 
the $(n,n)$ threshold quantum secret sharing (QSS) protocol
based on the $(n+1)$-qubit Greenberger--Horne-Zeilinger (GHZ) state~\cite{GHZ},
and hereafter we call this HBB QSS protocol.
Since then, a variety of theoretical results related to the HBB QSS protocol 
have been presented~\cite{KKI99,G00,X04,ZM05,Q07,SMH10,MM13,IG17,WQ18}, 
and various experiments on QSS 
have been conducted~\cite{TZG01,L04,C05,G07,BM14}.

In the HBB QSS protocol, a dealer can distribute his/her classical secret,
and legitimate players can protect 
the secret from eavesdropping and dishonest players.
The HBB QSS protocol is as follows. 
Suppose that $N\geq 3$ players share 
the GHZ state,
$\ket{\Psi_0^+}=\frac{1}{\sqrt{2}}\left(\ket{0}+\ket{2^{N}-1}\right)$,
and each player measures his/her own qubit 
in the $X$ basis or the $Y$ basis. 
If the number of players measuring their parts in the $Y$ basis is an even number, 
then they have a perfect correlation of classical bits 
to accomplish a secret sharing of classical information:
\begin{equation}
m_{1} \oplus m_{2} \oplus \cdots \oplus m_{N}=
\begin{cases}
0 & \quad\textrm{if the number of $Y$ is $4t$ for some $t \in \mathbb{Z}$,}\\
1 & \quad\textrm{if the number of $Y$ is $4t+2$ for some $t \in \mathbb{Z}$,} 
\end{cases}
\label{eq:correlation}
\end{equation}
where $m_{j}$ is the measurement outcome of the $j$th player.

On this account, a secure secret sharing can be obtained
by sharing the GHZ state $\ket{\Psi_0^+}$. 
However, it is difficult for $N$ players to share the pure state $\ket{\Psi_0^+}$,
although they can share a state close to $\ket{\Psi_0^+}$
by means of distillation schemes~\cite{CL07,LP08}.
Hence, it can be an important task to find out 
if there is a secure QSS protocol 
when $N$ players share a state close to the state $\ket{\Psi_0^+}$. 

As a way to address this issue, we propose an $(N-1,N-1)$ threshold QSS protocol
on a given $N$-qubit state close to the GHZ state.
In our QSS protocol, $N$ players measure their qubits 
in the $X$ basis or the $Y$ basis as in the HBB QSS protocol.
However, in some states, this method cannot evenly divide a secret.
For instance, if Alice, Bob, and Charlie share the state 
\begin{equation}
\tilde{\rho}_\mathrm{ABC}=\frac{9}{10} \ket{\Psi_{0}^{+}}\bra{\Psi_{0}^{+}}
+\frac{1}{10}\ket{\Psi_{2}^{+}}\bra{\Psi_{2}^{+}}
+\frac{3}{10}\left( \ket{\Psi_{0}^{+}}\bra{\Psi_{2}^{+}}
+\ket{\Psi_{2}^{+}}\bra{\Psi_{0}^{+}}\right), \nonumber
\end{equation}
where $\ket{\Psi_0^+}=\frac{1}{\sqrt{2}}\left(\ket{000}+\ket{111}\right)$
and $\ket{\Psi_2^+}=\frac{1}{\sqrt{2}}\left(\ket{010}+\ket{101}\right)$,
and they perform the HBB QSS protocol on the state $\tilde{\rho}_\mathrm{ABC}$,
then the mutual information $I(m_\mathrm{A}:m_\mathrm{B})=0$, 
and $I(m_\mathrm{A}:m_\mathrm{C})\approx 0.14$.
In order to enable a dealer to evenly distribute his/her secret to other players,
our QSS protocol includes the step of depolarizing the initial states 
to a GHZ diagonal state of the form 
\begin{equation}
\rho_{N}=\lambda_{0}^{+}\ket{\Psi_{0}^{+}}\bra{\Psi_{0}^{+}}+\lambda_{0}^{-}\ket{\Psi_{0}^{-}}\bra{\Psi_{0}^{-}}
+\sum_{j=1}^{2^{N-1}-1}\lambda_j\left(\ket{\Psi_j^+}\bra{\Psi_j^+}+\ket{\Psi_j^-}\bra{\Psi_j^-}\right)
\label{eq:DeRhoN}
\end{equation} 
by means of local operations and classical communication (LOCC),
where 
\begin{equation}
\ket{\Psi_j^\pm}=\frac{1}{\sqrt{2}}\left(\ket{j}\pm\ket{2^{N}-1-j}\right) \nonumber
\end{equation}
and $\lambda_0^+ +\lambda_0^- +2\sum_j \lambda_j =1$~\cite{DCT99}.
In addition, we apply an inequality 
derived from the Mermin inequality~\cite{M90,BK93} to our QSS protocol
to determine whether $N$ players securely achieve perfectly correlated classical bits 
from measurement outcomes.

We then show the security of our QSS protocol for two cases.
The first case is when all players are trusted, 
and the second case is when there are dishonest players.
Especially in the second case,
we deal with the situation 
where each player can only access his/her qubit system, 
but dishonest players can communicate 
with eavesdropper Eve who can handle the environment.

We organize our paper as follows.
In Sect.~\ref{QSSprotocol} 
we introduce our $(N-1,N-1)$ threshold QSS protocol
and discuss some issues 
arising from dishonest players for steps.
In Sect.~\ref{SecurityProof} 
we provide a security proof of our QSS protocol,
and, in Sect.~\ref{Example} we give an example
that use the Werner state~\cite{W89}.
Finally, we conclude this paper
and present some discussion 
in Sect.~\ref{conclusion}.


\section{Quantum secret sharing protocol}
\label{QSSprotocol}
We here introduce our QSS protocol.
To begin with, let us assume that $N\geq 3$ players, $A_{1}$, $A_{2}$, $\ldots$ , $A_{N}$, 
share sufficiently many identical $N$-qubit states, 
and $A_{1}$ is a player who wants to distribute his/her secret to the others.

\begin{itemize}

\item \textbf{Depolarization:} 
$N$ players depolarize the initial states  
to a GHZ diagonal state of the form 
in Eq.~(\ref{eq:DeRhoN}) 
by using LOCC 
until they share $(4+\varepsilon)n$ depolarized states.
They choose $\varepsilon$ with high probability 
there are both more than $2n$ 0's and more than $2n$ 1's 
in an arbitrary $(4+\varepsilon)n$-bit string
in which 0 and 1 randomly appear for each bit.

\item \textbf{Measurement:} 
Each player $A_{i}$ randomly measures his/her qubits 
of the depolarized states in the $X$ basis or the $Y$ basis.

\item \textbf{Sifting 1:} 
$A_{1}$ randomly chooses $(2+\varepsilon ')n$ strings of $N$ bits.
$A_{1}$ selects $\varepsilon '$,
which is less than $\varepsilon$,
in a similar way to $\varepsilon$.
In the strings that $A_{1}$ chooses, 
the other players, $A_{2}$, $\ldots$ , $A_{N}$, publicly announce 
which basis they used with their measurement outcomes.
In this process, $A_{1}$ does not release measurement information,
and for each string, the others present 
their measurement information in a different order.
Then $A_{1}$ discards their measurement results 
if an odd number of players are measured in the $Y$ basis.
With high probability, there are at least $n$ strings of $N$ bits.
Otherwise, $A_{1}$ aborts the protocol.
   
\item \textbf{Security check:} 
$A_{1}$ randomly chooses $n$ strings of the sifted measurement results.
For $1 \leq i \leq n$ and $1 \leq j \leq N$,
let $m_{i,j}$ be the $j$th bit of the $i$th string,
and let $\mathcal{M}_{i,j}$ be the measurement basis, 
in which $m_{i,j}$ was obtained in Measurement step.
$A_{1}$ calculates 
\begin{equation*}
S_n=\sum_{i=1}^{n} (-1)^{\frac{k_{i}}{2}}(-1)^{\oplus_{j=1}^{N} m_{i,j}},
\end{equation*}
where $k_{i}=|\{j:\mathcal{M}_{i,j}=Y\}|$.
$A_{1}$ aborts the protocol if $S_n/n \leq q$,
where $q$ is a solution 
of the equation $(1-q)^{2}\log(1-q)+(1+q)^{2}\log(1+q)+(1-q^{2})\log(1-q^{2})=2$ in $[0.5,1]$. 
In fact, $q\approx 0.78$.

\item \textbf{Sifting 2:}
For the remaining strings, $N$ players publicly announce which basis they used.
They leave only the results 
when an even number of players measured in the Y basis,
and discard the rest.
If the number of the sifted strings is greater than or equal to $n$,
then $N$ players randomly choose $n$ strings among them,
and otherwise, they abort the protocol.

\item \textbf{Post-processing:} 
$N$ players apply classical error correction and privacy amplification to the remnant~\cite{NC00}.

\end{itemize}

Our QSS protocol can be properly performed if there is no dishonest player.
However, there may be dishonest players in secret sharing schemes,
so we should take account of the situations 
that may arise from dishonest players.
We here discuss some issues arising from dishonest players 
for each step in our QSS protocol.

\subsection{Depolarization}
\label{Depolarization}
We first consider a case 
that $N$ players share a state close to $\ket{\Psi_0^+}$, but not $\ket{\Psi_0^+}$,
and perform the HBB QSS protocol on the state.
As mentioned earlier, fewer than $N-1$ players should have no information 
about the key in secret sharing.
However, this does not generally hold in this case.
For example, suppose that Alice, Bob, and Charlie share the state 
\begin{eqnarray}
\widetilde{\rho}_\mathrm{ABC}&=&p \ket{\Psi_{0}^{+}}\bra{\Psi_{0}^{+}}
+(1-p)\ket{\Psi_{2}^{+}}\bra{\Psi_{2}^{+}} \nonumber \\
&&\quad+\,\sqrt{p(1-p)}\left( \ket{\Psi_{0}^{+}}\bra{\Psi_{2}^{+}}
+\ket{\Psi_{2}^{+}}\bra{\Psi_{0}^{+}}\right),
\label{eq:RhoEx}
\end{eqnarray}
where $\ket{\Psi_0^+}=\frac{1}{\sqrt{2}}\left(\ket{000}+\ket{111}\right)$,
$\ket{\Psi_2^+}=\frac{1}{\sqrt{2}}\left(\ket{010}+\ket{101}\right)$,
and $0 \le p \le 1$,
and they carry out the HBB QSS protocol on the state $\widetilde{\rho}_\mathrm{ABC}$.
Then since Bob or Charlie should not know anything 
about Alice's information 
with their own information alone in secret sharing schemes,
the mutual information 
$I(m_\mathrm{A}:m_\mathrm{B})$ and $I(m_\mathrm{A}:m_\mathrm{C})$ must be zero,
where $m_\mathrm{A}$, $m_\mathrm{B}$, and $m_\mathrm{C}$ are the measurement outcomes 
of Alice, Bob, and Charlie, respectively. 
However, in this case,
if Alice and Charlie measure their qubits in the same basis, 
$I(m_\mathrm{A}:m_\mathrm{C})=1-h\left(\frac{1}{2}+\sqrt{p(1-p)}\right)$,
where $h(\cdot)$ is the binary entropy, that is, $h(x) \equiv -x\log x -(1-x)\log (1-x)$,
and in another case $I(m_\mathrm{A}:m_\mathrm{C})=0$.
Hence, on average,
\begin{equation}
I(m_\mathrm{A}:m_\mathrm{C})=\frac{1}{2}-\frac{1}{2}h\left(\frac{1}{2}+\sqrt{p(1-p)}\right), \nonumber
\end{equation} 
and we can clearly see that if $p\neq 0$ or $p\neq 1$, 
then the mutual information is not zero.

One way to work out this problem is that
$N$ players execute the Depolarization step.
We remark that an arbitrary $N$-qubit state can be depolarized 
to the GHZ diagonal state of the form in Eq.~(\ref{eq:DeRhoN})
by means of LOCC,
and that if $N$ players share this state 
and perform the HBB QSS protocol on this state, 
then they obtain $I(m_{i}:m_{j_{1}}\oplus m_{j_{2}}\oplus \cdots \oplus m_{j_{t}})=0$ 
for distinct $i$, $j_{1}$, $\ldots$, $j_{t}$ ($1\le t \le N-2$). 
Hence, this step helps them to equally divide the secret.

However, if there are dishonest players,
then $N$ players may not be able to share 
the depolarized state of the form in Eq.~(\ref{eq:DeRhoN})
after the Depolarization step.
We note that all players do not know 
what the initial state is,
so it is difficult for dishonest players 
to determine whether the state is useful to them.
In some cases, dishonest players may be penalized 
if they do not follow this step.
For instance, assume that Bob is an dishonest player in the example in Eq.~(\ref{eq:RhoEx}),
and that he is able to communicate with Eve who can handle the environment.
In this case, the Holevo quantity $\chi(m_\mathrm{A}:m_\mathrm{B}E)$,
which is an upper bound on the Bob and Eve's accessible information, is zero.
Hence Bob cannot obtain Alice's information by using his own information
even if he can communicate with Eve, 
but honest Charlie can gain the information.

To sum up, the ignorance of the initial state 
makes it hard for dishonest players to predict the case 
that they do not follow the Depolarization step,
so it is difficult for them
to establish a strategy for getting more information.
In addition, if dishonest players do not carry out this step,
they can have disadvantages 
for some situations such as the example above.
Dishonest players may be reluctant to encounter such situations.
Therefore, we may assume 
that dishonest players follow this step.

\subsection{Measurement}
\label{Measurement}
In Measurement step, 
$N$ players randomly measure their qubits in the $X$ basis or the $Y$ basis.
Then, by using their measurement outcomes, 
they can determine if they can obtain a secret key.
If they pass the Security check step,
then they can have a secret key for secret sharing,
which will be shown in Sect.~\ref{SecurityProof}.
The problem here is that
if there are dishonest players,
then they may measure their qubits
in other bases, not in the $X$ basis or the $Y$ basis.
Dishonest players may not even measure their qubits in this step.
However, it is difficult for them to determine 
whether these methods are meaningful
since they have no knowledge of what state they share.
Rather, it can be a good choice for them to follow this step.
We will discuss this in detail in Sects.~\ref{Sifting 1} and \ref{Sifting 2}.

\subsection{Sifting 1}
\label{Sifting 1}

Let us first consider the following example.
Suppose that Alice, Bob, and Charlie share sufficiently many identical depolarized states
of the form in Eq.~(\ref{eq:DeRhoN}),
and that they randomly measure their qubits in the $X$ basis or the $Y$ basis.
Then, as in the HBB QSS protocol,
they reveal the basis information 
and leave only the measurement outcomes 
which an even number of players measured in the $Y$ basis.
If they share some of measurement outcomes,
then they can asymptotically estimate the value of $\lambda_{0}^{+}-\lambda_{0}^{-}$,
which will be shown in Sect.~\ref{Security check}.
We note that
\begin{eqnarray}
\ket{\Psi}_\mathrm{ABCE}&=&\sqrt{\lambda_{0}^{+}}\ket{\Psi_{0}^{+}}\ket{e_{0}^{+}}+\sqrt{\lambda_{0}^{-}}\ket{\Psi_{0}^{-}}\ket{e_{0}^{-}} \nonumber \\
&&\quad+\,\sum_{j=1}^{3}\sqrt{\lambda_j}
\left(\ket{\Psi_j^+}\ket{e_{j}^{+}}+\ket{\Psi_j^-}\ket{e_{j}^{-}}\right) \nonumber
\end{eqnarray}
is a purification of the depolarized state,
where $\left\{\ket{e_{0}^{+}},\ket{e_{0}^{-}},\ldots,\ket{e_{3}^{+}},\ket{e_{3}^{-}} \right\}$
is an orthonormal basis in the support of $\rho_E$.
If Eve measures her parts in the basis 
and Bob communicates with Eve,
then Bob and Eve can be perfectly aware of the value of $m_\mathrm{A} \oplus m_\mathrm{C}$ 
even if Alice and Charlie do not reveal their measurement outcomes.
It means that Bob can estimate the value of $\lambda_{0}^{+}-\lambda_{0}^{-}$
without Alice and Charlie,
and that if Bob fabricates his measurement outcomes, 
then Alice and Charlie can make a false judgment 
even though the state is not close to $\ket{\Psi_0^+}$.
This attack is meaningful to dishonest Bob
since Eve can measure her system
after players decide 
which measurement outcomes they will use as a secret key.
Hence, even if $\lambda_{0}^{+}-\lambda_{0}^{-}$ is not close to 1, 
Bob can manipulate his measurement outcomes so that players can share a key 
and then can have more dealer's information.

To handle this problem, our QSS protocol includes two Sifting steps.
In the Sifting 1 step, 
$A_{1}$ does not announce his/her measurement information,
so dishonest players do not know which strings $A_{1}$ will discard.
This makes it difficult for dishonest players to obtain information 
about the depolarized state shared by $N$ players,
and all they can do is to manipulate the measurement outcomes.

If dishonest players manipulate the measurement outcomes in this step, 
what is its purpose?
Dishonest players may want $N$ players to pass the Security check step 
even if $N$ players share a state that cannot pass the Security check step.
However, it is hard for dishonest players to deceive other players by their intention.
In the Sifting 1 step, 
if dishonest players release 
their basis information and measurement outcomes before other players,
they do not know other player's basis information.
Then they should guess which outcomes will satisfy the condition Eq.~(\ref{eq:correlation})
in order to deceive other players, but it is nearly impossible.
The following example helps us to understand this.
Let us assume that Alice and Charlie randomly measure their qubits 
both in the $X$ basis or both in the $Y$ basis 
on a GHZ diagonal state
of the form in Eq.~(\ref{eq:DeRhoN}) when $N=3$,
and that Bob does not know
whether they measured in the $X$ basis or they measured in the $Y$ basis.
Then, for any Bob's measurement, 
the Holevo quantity $\chi(m_\mathrm{A} \oplus m_\mathrm{C}:m_\mathrm{B}E)$ 
is upper bounded by $\lambda_{0}^{+}+\lambda_{0}^{-}+2\lambda_{2}$,
and so whether or not Bob can deceive other players depends on the shared state.
However, since Bob has no knowledge of the shared state, 
it is impossible to exactly judge whether manipulating his measurement outcomes is useful to him.

In short, since dishonest players do not know which state they share, 
they cannot always make a correct decision on whether it is useful for them 
to measure their qubits in other bases or fabricate measurement outcomes
in the Sifting 1 step.
These methods may rather interfere with key generation,
which is not the case that dishonest players want, 
since their aim is to gain the dealer's information 
with their own information after sharing a key.
Therefore, we may assume 
that dishonest players follow our QSS protocol up to this step.

\subsection{Security check}
\label{Security check}

The dealer $A_{1}$ needs a method to determine 
whether $N$ players can perform a secure QSS.
To this end, we apply the Mermin operator in our QSS protocol
because the Mermin operator can be used
to check how close the depolarized state is to the $\ket{\Psi_0^+}$.
The following propositions explain this.

\begin{proposition}
Let 
\begin{equation}
\mathcal{B}_{M}=\sum(-1)^{\frac{l}{2}}
\mathcal{P}_{1}\otimes\cdots\otimes\mathcal{P}_{N} \nonumber
\end{equation}
be the Mermin operator~\cite{M90,BK93}
where ${\mathcal{P}_{i} \in \left\{X,Y\right\}}$ 
and the sum is over all operators 
such that $l\equiv|\{i:\mathcal{P}_{i}=Y\}|$ is even. 
Then
\begin{equation}
\lim_{n \to \infty}\frac{1}{n}S_{n}
=\frac{1}{2^{N-1}}\mathrm{tr}\left(\rho_{N}\mathcal{B}_{M}\right), \nonumber
\end{equation}
where $S_n$ is the value calculated in the Security check step,
and $\rho_N$ is the depolarized state of the form in Eq.~(\ref{eq:DeRhoN}).
\label{prop1}
\end{proposition}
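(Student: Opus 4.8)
The plan is to recognize $S_n/n$ as the empirical mean of $n$ independent and identically distributed bounded random variables and apply the law of large numbers. First I would isolate the single-round random variable. In the Security check step the $i$th string comes from one copy of $\rho_N$ on which every player $A_j$ has independently chosen a basis $\mathcal{M}_{i,j}\in\{X,Y\}$ uniformly at random, subject only to the constraint imposed by Sifting~1 that $k_i=|\{j:\mathcal{M}_{i,j}=Y\}|$ be even; the outcome bit $m_{i,j}$ is read off so that $(-1)^{m_{i,j}}$ is the $\pm1$ eigenvalue measured by $A_j$. Since the selection rule in Sifting~1 (discard iff the $Y$-count is odd) depends only on the basis choices, which are independent across rounds and independent of the quantum outcomes given the bases, the variables $X_i:=(-1)^{k_i/2}(-1)^{\oplus_{j=1}^N m_{i,j}}$ remain i.i.d.\ after the conditioning, and $A_1$'s subsequent uniform choice of $n$ of the sifted strings does not disturb this. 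As $|X_i|=1$, each $X_i$ is integrable, so the strong law of large numbers yields $S_n/n=\frac{1}{n}\sum_{i=1}^n X_i\to\mathbb{E}[X_1]$ almost surely.

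Next I would evaluate $\mathbb{E}[X_1]$ by conditioning on the basis vector $\mathcal{P}=(\mathcal{P}_1,\dots,\mathcal{P}_N)$. Writing $s_j=(-1)^{m_j}$ for the eigenvalue player $j$ obtains, one has $(-1)^{\oplus_j m_j}=\prod_j s_j$, which is precisely the outcome of measuring the commuting product observable $\mathcal{P}_1\otimes\cdots\otimes\mathcal{P}_N$ on $\rho_N$; hence the Born rule gives $\mathbb{E}\!\left[(-1)^{\oplus_j m_j}\mid\mathcal{P}\right]=\mathrm{tr}\!\left(\rho_N\,\mathcal{P}_1\otimes\cdots\otimes\mathcal{P}_N\right)$. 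Because $\mathcal{P}$ is uniform over the $2^{N-1}$ settings with even $Y$-count and there $k_i$ coincides with the exponent $l$ appearing in $\mathcal{B}_M$,
\begin{equation*}
\mathbb{E}[X_1]=\frac{1}{2^{N-1}}\sum_{\mathcal{P}:\,l\text{ even}}(-1)^{l/2}\,\mathrm{tr}\!\left(\rho_N\,\mathcal{P}_1\otimes\cdots\otimes\mathcal{P}_N\right)=\frac{1}{2^{N-1}}\,\mathrm{tr}\!\left(\rho_N\,\mathcal{B}_M\right),
\end{equation*}
the last step being just the definition of the Mermin operator together with linearity of the trace. Combined with the previous paragraph this gives the asserted limit.

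The computation is essentially bookkeeping; the one point that warrants care is the i.i.d.\ claim in the first paragraph, i.e.\ that neither the discarding rule of Sifting~1 nor the random choice of $n$ strings in the Security check step biases the sample. This is fine because every selection is made using the basis data alone, which is independent across copies and conditionally independent of the measurement outcomes, so the conditioning factorizes round by round and each surviving $X_i$ has the conditional law used above. I would also note that for the protocol only convergence in probability is needed downstream, and that this follows from the almost-sure statement (or directly from Chebyshev's inequality, since $\mathrm{Var}(X_1)\le1$).
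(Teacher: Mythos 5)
Your proof is correct, and it reaches the conclusion by a route that differs from the paper's in the key computation. The paper also reduces everything to the law of large numbers applied to $a_i=(-1)^{k_i/2}(-1)^{\oplus_j m_{i,j}}$, but it evaluates $\mathbb{E}[a_i]$ by exploiting the GHZ-diagonal form of $\rho_N$: it observes that $a_i=+1$ exactly when the correlation condition of Eq.~(\ref{eq:correlation}) holds, computes $\mathrm{Prob}\{a_i=\pm1\}=\lambda_0^{\pm}+\sum_j\lambda_j$ on $\rho_N$, concludes $\mathbb{E}[a_i]=\lambda_0^{+}-\lambda_0^{-}$, and then separately invokes the identity $\lambda_0^{+}-\lambda_0^{-}=\frac{1}{2^{N-1}}\mathrm{tr}(\rho_N\mathcal{B}_M)$ (which rests on $\ket{\Psi_0^{\pm}}$ being the $\pm2^{N-1}$ eigenstates of $\mathcal{B}_M$ and the other $\ket{\Psi_j^{\pm}}$ contributing zero). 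You instead condition on the basis vector, use the Born rule to identify $\mathbb{E}[(-1)^{\oplus_j m_j}\mid\mathcal{P}]=\mathrm{tr}(\rho_N\,\mathcal{P}_1\otimes\cdots\otimes\mathcal{P}_N)$, and average the signed terms over the uniform distribution on the $2^{N-1}$ even-$Y$ settings, obtaining $\frac{1}{2^{N-1}}\mathrm{tr}(\rho_N\mathcal{B}_M)$ directly from the definition of the Mermin operator. Your route is state-agnostic (it never uses that $\rho_N$ is GHZ diagonal, so it would prove the statement for an arbitrary shared state) and it makes explicit the sampling argument --- that Sifting~1 and the random selection in the Security check depend only on the basis data and hence do not bias the surviving rounds --- which the paper leaves implicit. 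What the paper's route buys is the explicit intermediate identity $\lim_n S_n/n=\lambda_0^{+}-\lambda_0^{-}$, which is the quantity actually fed into the later security analysis (Lemmas~\ref{lem3}--\ref{lem5}); with your argument one would recover it by combining the proposition with that same eigenvalue computation.
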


\begin{proof}
Let $a_{i}=(-1)^{\frac{k_{i}}{2}}(-1)^{\oplus_{j=1}^{N} m_{i,j}}$.
Then $S_n=\sum_{i=1}^{n} a_{i}$.
It can be shown that if the $i$th chosen string satisfies 
the condition Eq.~(\ref{eq:correlation}),
then $a_{i}=1$ and otherwise, $a_{i}=-1$,
and that
\begin{equation*}
\mathrm{Prob} \{a_{i}=\pm 1\}=\lambda_{0}^{\pm}+\sum_{j=1}^{2^{N-1}-1}\lambda_j
\end{equation*}
on the state $\rho_N$.
Since $(+1)\mathrm{Prob} \{a_{i}=1\}+(-1)\mathrm{Prob} \{a_{i}=-1\}=\lambda_{0}^{+}-\lambda_{0}^{-}$
and $\lambda_{0}^{+}-\lambda_{0}^{-}=\frac{1}{2^{N-1}}\mathrm{tr}\left(\rho_{N}\mathcal{B}_{M}\right)$, 
the law of large numbers completes the proof.
\end{proof}

In addition, since $A_{1}$ randomly chooses strings
to calculate the value of $S_n$ in the Sifting 1 step,
$S_n$ satisfies the following proposition.

\begin{proposition}
Let $\widetilde{S}_{n}$ be the value calculated in the same way
as $S_n$, using the sifted measurement outcomes in the Sifting 2 step.
Then for any $\delta>0$, the probability 
$\mathrm{Prob}\{S_{n}>(1-2\delta)n 
~\mathrm{and}~ 
\widetilde{S}_{n}<(1-2\delta-\epsilon)n\}$
is asymptotically less than $\exp(-O(\epsilon^{2}n))$.
\label{prop2}
\end{proposition}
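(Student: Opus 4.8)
The plan is to recognize this as a hypergeometric-sampling concentration statement. In the Sifting~1 step, the dealer $A_1$ holds a pool of sifted strings (those surviving the parity condition on the number of $Y$-measurements); from this pool $A_1$ draws $n$ strings uniformly at random to form the Security check sample, and the remaining strings are what Sifting~2 operates on. Each sifted string $i$ carries a fixed value $a_i = (-1)^{k_i/2}(-1)^{\oplus_j m_{i,j}} \in \{+1,-1\}$, determined already by the measurement outcomes and bases. So $S_n$ is the sum of the $a_i$'s over the randomly chosen Security check sample, and $\widetilde S_n$ is the sum of the $a_i$'s over the disjoint sample used in Sifting~2; both are sums of $\pm 1$ values sampled without replacement from a common finite population whose overall fraction of $+1$'s is some $\mu$.

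First I would set up notation: let the sifted pool have size $M$ (with $M \gtrsim 2n$ asymptotically, by the high-probability guarantees in the Sifting steps), let $\mu$ be the empirical fraction of strings in the pool with $a_i=+1$, and write $S_n = \sum_{i\in\mathcal{A}} a_i$ and $\widetilde S_n = \sum_{i\in\widetilde{\mathcal{A}}} a_i$ for the two disjoint uniformly random subsets $\mathcal{A},\widetilde{\mathcal{A}}$ of size $n$. Since sampling without replacement from a bounded population is at least as concentrated as the corresponding i.i.d.\ sampling (Hoeffding--Serfling, or Hoeffding's observation that the moment generating function is dominated by the with-replacement case), each of $S_n$ and $\widetilde S_n$ concentrates around its mean $(2\mu-1)n$ with Gaussian tails: for any $\eta>0$,
\begin{equation*}
\mathrm{Prob}\{S_n < (2\mu-1)n - \eta n\} \le \exp(-\Omega(\eta^2 n)), \qquad
\mathrm{Prob}\{\widetilde S_n < (2\mu-1)n - \eta n\} \le \exp(-\Omega(\eta^2 n)).
\end{equation*}
Next I would argue the two-sided linking step: on the event $\{S_n > (1-2\delta)n\}$ we must have $(2\mu-1)n > (1-2\delta)n - \eta n$ except on a set of probability $\exp(-\Omega(\eta^2 n))$, i.e.\ the population mean is forced up; but then $\widetilde S_n$, which also concentrates around $(2\mu-1)n$, is below $(1-2\delta-\epsilon)n$ only if it deviates downward from its mean by at least $(\epsilon-\eta)n$, again an event of probability $\exp(-\Omega((\epsilon-\eta)^2 n))$. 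Choosing $\eta = \epsilon/2$ and taking a union bound over the two bad events gives the claimed $\exp(-O(\epsilon^2 n))$ bound. The word ``asymptotically'' absorbs the event that the pool size $M$ fails to be $\Theta(n)$, which by the Sifting-step guarantees has vanishing probability.

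The main obstacle is handling the dependence structure cleanly: $\mathcal{A}$ and $\widetilde{\mathcal{A}}$ are \emph{disjoint} random subsets, so $S_n$ and $\widetilde S_n$ are negatively correlated through the shared pool, and $\mu$ is itself random (it depends on the measurement outcomes, hence ultimately on the shared state and any adversarial behavior up to this point). The correct way to deal with this is to condition on the entire pool --- that is, on the multiset $\{a_i\}$ of sifted values, which fixes $M$ and $\mu$ --- and prove the bound conditionally; since the conditional bound $\exp(-\Omega(\epsilon^2 n))$ is uniform over all pool configurations with $M=\Theta(n)$, it survives taking the expectation over the pool. One should also be slightly careful that the Security check sample and the Sifting~2 sample are drawn from the pool in the order specified by the protocol (Security check first, then Sifting~2 from the remainder); conditioning on $\mathcal{A}$ as well makes $\widetilde{\mathcal{A}}$ a uniform $n$-subset of a population of size $M-n$ with a now-slightly-shifted fraction of $+1$'s, and the same Hoeffding--Serfling estimate applies with $M-n = \Theta(n)$ in place of $M$. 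These are routine once the conditioning is organized correctly; I do not expect any genuinely new ingredient beyond standard sampling-without-replacement concentration.
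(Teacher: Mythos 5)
Your proposal is correct and follows essentially the same route as the paper: the paper simply recasts $S_n$ and $\widetilde S_n$ in terms of error counts $\xi,\widetilde\xi$ via $S_n=n-2\xi$ and cites the standard random sampling test (Nielsen--Chuang) as a black box, which is exactly the hypergeometric/Hoeffding--Serfling concentration statement you prove explicitly. Your additional care about conditioning on the pool and the disjointness of the two samples is the content of that cited lemma, so nothing is missing or different in substance.
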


\begin{proof}
Let $\xi$ and $ \widetilde{\xi}$ be the number of
strings which do not satisfy the condition Eq.~(\ref{eq:correlation})
among the chosen strings in the Security check step
and among the sifted strings in the Sifting 2 step, respectively.
Then it follows from the random sampling tests~\cite{NC00} that 
for $\delta>0$, the probability 
$\mathrm{Prob}\{\xi<\delta n ~\mathrm{and}~ \widetilde{\xi}>(\delta+\epsilon)n\}$
is asymptotically less than $\exp(-O(\epsilon^{2}n))$.
Since $S_{n}=n-2\xi$ and $\widetilde{S}_{n}=n-2\widetilde{\xi}$,
we can obtain $\mathrm{Prob}\{\xi<\delta n ~\mathrm{and}~ \widetilde{\xi}>(\delta+\epsilon)n\}
=\mathrm{Prob}\{S_{n}>(1-2\delta)n ~\mathrm{and}~ \widetilde{S}_{n}<(1-2\delta-\epsilon)n\}$.
\end{proof}

Proposition~\ref{prop2} means that
for sufficiently large $n$,
if $S_n/n > q$ in the Security check step,
then $\widetilde{S}_{n}/n > q$ with high probability.
Here, $q$ is the value used to determine
whether the asymptotic key rate is positive, 
as we will precisely see in Sect.~\ref{SecurityProof}.
Therefore, after the Security check step,
$A_{1}$ can decide if $N$ players can have a secret key for secret sharing.

\subsection{Sifting 2}
\label{Sifting 2}

In our QSS protocol,
dishonest players may measure the parts not used in the Security check step 
in the Sifting 2 step, not in Measurement step.
Thus they can differently measure their qubits in this step.
However, since they do not have any information 
about the state which $N$ players share,
it is hard for them to determine whether this method is helpful to them.
Indeed, in some states, 
it can be beneficial to dishonest players
for them to measure in the $X$ basis or the $Y$ basis.
For instance, let us assume that Alice, Bob, and Charlie
share the depolarized state
\begin{equation}
\bar{\rho}_\mathrm{ABC}=p \ket{\Psi_{0}^{+}}\bra{\Psi_{0}^{+}}
+\frac{1}{2}(1-p)\left(\ket{\Psi_{1}^{+}}\bra{\Psi_{1}^{+}}
+\ket{\Psi_{1}^{-}}\bra{\Psi_{1}^{-}}\right), \nonumber
\end{equation}
where $\ket{\Psi_0^+}=\frac{1}{\sqrt{2}}\left(\ket{000}+\ket{111}\right)$,
$\ket{\Psi_1^\pm}=\frac{1}{\sqrt{2}}\left(\ket{001}\pm\ket{110}\right)$,
and $0 \le p \le 1$,
and Bob is a dishonest player who measures his qubits 
in an orthonormal basis $\left\{\mu\ket{0}+\nu\ket{1},\nu^{*}\ket{0}-\mu^{*}\ket{1}\right\}$ 
with $|\mu|^{2}+|\nu|^{2}=1$.
Then if Alice measures her qubits in the $X$ basis or the $Y$ basis,
the Holevo quantity $\chi(m_\mathrm{A}:m_\mathrm{B}E)$ is written as
\begin{eqnarray}
\chi(m_\mathrm{A}:m_\mathrm{B}E)
&=&{}S(\rho_{m_\mathrm{B}E})-\frac{1}{2}\left(S(\rho_{m_\mathrm{B}E|m_\mathrm{A}=0})+S(\rho_{m_\mathrm{B}E|m_\mathrm{A}=1})\right) \nonumber \\
&=&{}H \left( \frac{1}{2}p,\frac{1}{2}p,\frac{1}{2}(1-p)|\mu|^{2},\frac{1}{2}(1-p)|\mu|^{2},\frac{1}{2}(1-p)|\nu|^{2},\frac{1}{2}(1-p)|\nu|^{2}\right) \nonumber \\
&&\quad-\,H \left(\frac{1}{4}\left(1+\sqrt{T}\right),\frac{1}{4}\left(1+\sqrt{T}\right),\frac{1}{4}\left(1-\sqrt{T}\right),\frac{1}{4}\left(1-\sqrt{T}\right) \right) \nonumber \\
&=&{}-\,p\log p-(1-p)|\mu|^{2}\log (1-p)|\mu|^{2} \nonumber \\
&&-\,(1-p)|\nu|^{2}\log (1-p)|\nu|^{2}-h\left(\frac{1}{2}\left(1+\sqrt{T}\right)\right), \nonumber
\end{eqnarray}
where $H$ is the Shannon entropy and $T=1-4(1-p)\left(p+(1-3p)|\mu|^{2}|\nu|^{2}\right)$,
and it can be shown that if Bob measures his qubits in the $X$ basis or the $Y$ basis,
then the Holevo quantity $\chi(m_\mathrm{A}:m_\mathrm{B}E)$ becomes the maximum.
For these reasons, we can assume 
that dishonest players measure their qubits in the $X$ basis or the $Y$ basis,
and in this situation, they have no reason not to measure their qubits in Measurement step.

Dishonest players may not even follow the Sifting 2 step.
This can interfere with the generation of correlated key bits, 
but dishonest players do not get more dealer's information.
Hence, if it is not their intent to prevent key generation, 
they should perform this step well.

\subsection{Post-processing}

In order to obtain a secret key in the Post-processing step,
all players must cooperate,
and dishonest players can disturb this step.
However, this behavior does not allow dishonest players to get more information,
and they can only prevent key generation.
Dishonest players may not want to interfere with sharing a key,
so we can expect them 
to execute this step well.

\section{Security proof}
\label{SecurityProof}

In this section, we show that players can have a secret key for secret sharing by means of our protocol.
For every legitimate player's key bits,
it is known that the asymptotic key rate $K$
is lower bounded by the Devetak--Winter key rate $K_\mathrm{DW}$~\cite{DW05,IG17}.
Hence, the security can be verified by examining whether $K_\mathrm{DW}$ is positive.

There are two cases that we should consider 
in order to check the security in secret sharing schemes.
The first case is that all players are trusted,
and in this case, the Devetak--Winter key rate $K_\mathrm{DW}$ becomes
\begin{equation}
K_\mathrm{DW}=I(m_{1}:m_{2} \oplus \ldots \oplus m_{N})-\chi(m_{1}:E). \nonumber
\end{equation}
The other is that 
dishonest players exist.
If $A_{2},\cdots,A_{k+1}$$(1 \leq k \leq N-2)$ are $k$ dishonest players, 
then the Devetak--Winter key rate $K_\mathrm{DW}$ can be written as
\begin{equation}
K_\mathrm{DW}=I(m_{1}:m_{2} \oplus \cdots \oplus m_{N})-\chi(m_{1}:m_{2}\cdots m_{k+1}E). \nonumber
\end{equation}

As mentioned earlier, 
we assume that dishonest players follow our QSS protocol,
and under this assumption,
we can calculate the mutual information $I(m_{1}:m_{2} \oplus \cdots \oplus m_{N})$.
\begin{lemma}
\label{lem3}
Let $m_l$ $(1 \leq l \leq N)$ be 
the measurement outcomes of $l$th player in our QSS protocol.
Then
\begin{equation}
I(m_{1}:m_{2} \oplus \cdots \oplus m_{N})=
1-h \left( \frac{1}{2}(1-(\lambda_{0}^{+}-\lambda_{0}^{-}))\right). 
\label{eq:MI1}
\end{equation}
\end{lemma}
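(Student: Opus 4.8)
The plan is to compute the joint distribution of $m_1$ and $m_2\oplus\cdots\oplus m_N$ directly from the depolarized state $\rho_N$ in Eq.~(\ref{eq:DeRhoN}), restricted to the post-selected runs used for key generation, i.e.\ those runs in which an even number of players measured in the $Y$ basis (so that Eq.~(\ref{eq:correlation}) applies). First I would fix such a measurement pattern $\mathcal{P}_1\otimes\cdots\otimes\mathcal{P}_N$ with $l=|\{i:\mathcal P_i=Y\}|$ even, and observe that the relevant random variable is $b\equiv m_1\oplus(m_2\oplus\cdots\oplus m_N)=\oplus_{j=1}^N m_j$. By the argument already used in the proof of Proposition~\ref{prop1}, on each GHZ-diagonal component $\ket{\Psi_j^\pm}$ the parity $\oplus_j m_j$ equals (up to the fixed phase $(-1)^{l/2}$) a deterministic value: the ``$+$'' components and the ``$-$'' component $\ket{\Psi_0^-}$ contribute the correlation bit with the sign dictated by Eq.~(\ref{eq:correlation}), while $\ket{\Psi_0^-}$ flips it. Concretely, conditioning on the even-$Y$ post-selection, one gets $\mathrm{Prob}\{\,\oplus_j m_j = \text{``correct'' value}\,\}=\lambda_0^++\sum_{j\ge1}\lambda_j$ and $\mathrm{Prob}\{\,\text{``flipped''}\,\}=\lambda_0^-+\sum_{j\ge1}\lambda_j$, after normalizing by $\lambda_0^++\lambda_0^-+2\sum_{j\ge1}\lambda_j=1$; hence the ``error'' probability is $e\equiv\lambda_0^-+\sum_{j\ge1}\lambda_j=\tfrac12\big(1-(\lambda_0^+-\lambda_0^-)\big)$.

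Next I would argue that, within these post-selected runs, $m_1$ by itself is an unbiased bit: each single-qubit $X$- or $Y$-measurement outcome on the reduced state of any one party of a GHZ-diagonal state is $0$ or $1$ with probability $\tfrac12$ (the diagonal of each $\ket{\Psi_j^\pm}$ reduced to one qubit is maximally mixed). Therefore $m_1$ is uniform, and $m_2\oplus\cdots\oplus m_N = m_1\oplus b$ where $b$ is independent-looking only in the sense that its conditional distribution given $m_1$ is Bernoulli$(e)$ on the ``flip''. This is exactly a binary symmetric channel with crossover probability $e$ between $m_1$ and $m_2\oplus\cdots\oplus m_N$, so
\begin{equation*}
I(m_1:m_2\oplus\cdots\oplus m_N)=1-h(e)=1-h\!\left(\tfrac12\big(1-(\lambda_0^+-\lambda_0^-)\big)\right),
\end{equation*}
which is Eq.~(\ref{eq:MI1}). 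A final clean-up step is to note that this value is the same for every even-$Y$ pattern, so averaging over the patterns actually used in Sifting~2 does not change it.

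The main obstacle I anticipate is bookkeeping the sign/phase structure carefully enough to be sure that the ``error'' set is precisely $\{\ket{\Psi_0^-}\}\cup\{\ket{\Psi_j^\pm}:j\ge1\}$ and not, say, pattern-dependent: one must check that for every even-$l$ choice of $X/Y$ measurements, the components $\ket{\Psi_j^\pm}$ with $j\ge1$ contribute a parity outcome that is equidistributed (so they contribute $\tfrac12$ to both the ``correct'' and ``flipped'' tallies, giving the $\sum_{j\ge1}\lambda_j$ appearing symmetrically in both probabilities), while the $j=0$ sector is rigidly correlated with sign $+$ for $\ket{\Psi_0^+}$ and sign $-$ for $\ket{\Psi_0^-}$. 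This is essentially the content already extracted in Proposition~\ref{prop1}, so I would lean on that computation rather than redo it; once the crossover probability $e=\tfrac12(1-(\lambda_0^+-\lambda_0^-))$ is in hand and $m_1$ is shown uniform, the mutual-information identity is immediate.
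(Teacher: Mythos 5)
Your proof is correct and follows essentially the same route as the paper: the paper computes the joint distribution of $(m_1,\,m_2\oplus\cdots\oplus m_N)$ as $\left(\tfrac{\tilde p}{2},\tfrac{\tilde p}{2},\tfrac{1-\tilde p}{2},\tfrac{1-\tilde p}{2}\right)$ with $\tilde p=\lambda_0^{+}+\sum_{j\ge1}\lambda_j$ and evaluates $H(m_1)+H(m_2\oplus\cdots\oplus m_N)-H(m_1,m_2\oplus\cdots\oplus m_N)=1-h(1-\tilde p)$, which is exactly your uniform-input binary symmetric channel with crossover $e=1-\tilde p=\tfrac12\left(1-(\lambda_0^{+}-\lambda_0^{-})\right)$. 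One garbled sentence aside (your first description of which components flip the parity contradicts your correct final paragraph), the two ingredients you isolate, namely the error probability inherited from the Proposition~\ref{prop1} computation and the uniformity of $m_1$ from $\rho_{A_1}=\tfrac12 I$, are the same ones the paper uses.
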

\begin{proof}
Let $\rho_{N}$ be the depolarized state 
that $N$ players share in the Depolarization step.
Then we obtain
\begin{eqnarray}
\rho_{A_1}&=&\frac{1}{2}I_{A_1}, \nonumber \\
\rho_{A_{2}\cdots A_{N}}
&=&\frac{1}{2} 
(\lambda_{0}^{+}+\lambda_{0}^{-})
\left( \ket{0}\bra{0}
+\ket{2^{N-1}-1}\bra{2^{N-1}-1} \right) \nonumber \\
&&\quad+\,\sum_{j=1}^{2^{N-1}-1}\lambda_{j}
\left( \ket{j}\bra{j}
+\ket{2^{N-1}-1-j}\bra{2^{N-1}-1-j} \right). \nonumber
\end{eqnarray}
Thus
\begin{eqnarray}
&&H(m_1)=H(m_{2} \oplus \cdots \oplus m_{N})
={}h\left( \frac{1}{2} \right), \nonumber \\
&&H(m_{1},m_{2} \oplus \cdots \oplus m_{N})
={}H \left( \frac{1}{2}\tilde{p},\frac{1}{2}\tilde{p},
\frac{1}{2}(1-\tilde{p}),\frac{1}{2}(1-\tilde{p})\right), \nonumber
\end{eqnarray}
where $H$ is the Shannon entropy and $\tilde{p}=\lambda_{0}^{+}+\sum_{j=1}^{2^{N-1}-1}\lambda_{j}$.
Hence,
\begin{eqnarray}
I(m_{1}:m_{2} \oplus \cdots \oplus m_{N})
&=& H(m_{1})+H(m_{2} \oplus \cdots \oplus m_{N})-H(m_{1},m_{2} \oplus \cdots \oplus m_{N}) \nonumber \\
&=& 1-h \left( \frac{1}{2}(1-(\lambda_{0}^{+}-\lambda_{0}^{-}))\right). \nonumber
\end{eqnarray}
\end{proof}

\subsection{Case 1: all players are trusted}
\label{SP1}

We here consider the case that all players are trusted.
In order to calculate the Holevo quantity $\chi(m_{1}:E)$,
we first think about the purification of $\rho_{N}$:
\begin{eqnarray}
\ket{\Psi}_{A_{1}A_{2}\cdots A_{N}E}&=&\sqrt{\lambda_{0}^{+}}\ket{\Psi_{0}^{+}}\ket{e_{0}^{+}}+\sqrt{\lambda_{0}^{-}}\ket{\Psi_{0}^{-}}\ket{e_{0}^{-}} \nonumber \\
&&\quad+\,\sum_{j=1}^{2^{N-1}-1}\sqrt{\lambda_j}
\left(\ket{\Psi_j^+}\ket{e_{j}^{+}}+\ket{\Psi_j^-}\ket{e_{j}^{-}}\right), \nonumber
\end{eqnarray}
where $\inn{e_{i}^{a}}{e_{j}^{b}}=\delta_{ij}\delta_{ab}$.
Calculating $\rho_E$ and
$\rho_{A_{1}E}$, we have
\begin{eqnarray}
\label{hol1}
\chi(m_{1}:E)
&=& S(\rho_{E})-\frac{1}{2}\left(S(\rho_{E|m_{1}=0})+S(\rho_{E|m_{1}=1})\right) \nonumber\\
&=& -\lambda_{0}^{+}\log\lambda_{0}^{+}-\lambda_{0}^{-}\log\lambda_{0}^{-}-2\sum_{i=1}^{2^{N-1}-1}\lambda_{i}\log\lambda_{i} \nonumber\\
&&\quad+\,(\lambda_{0}^{+}+\lambda_{2^{N-1}-1})\log(\lambda_{0}^{+}+\lambda_{2^{N-1}-1}) \nonumber \\
&&\quad+\,(\lambda_{0}^{-}+\lambda_{2^{N-1}-1})\log(\lambda_{0}^{-}+\lambda_{2^{N-1}-1}) \nonumber\\
&&\quad+\,2\sum_{i=1}^{2^{N-2}-1}(\lambda_{i}+\lambda_{t_{i}})\log(\lambda_{i}+\lambda_{t_{i}}),
\end{eqnarray}
where $t_{i}=2^{N-1}-1-i$. 
Since $N$ players do not estimate 
the values of $\lambda_{0}^{+}$,
$\lambda_{0}^{-}$, and $\lambda_{j}$ $(1 \leq j \leq 2^{N-1}-1)$ in our QSS protocol, 
they cannot calculate the Holevo quantity $\chi(m_{1}:E)$ in Eq.~(\ref{hol1}). 
We hence find an upper bound 
of the Holevo quantity $\chi(m_{1}:E)$,
which can be calculated by $\lambda_{0}^{+}-\lambda_{0}^{-}$.

\begin{lemma}
Let $p\equiv\lambda_{0}^{+}-\lambda_{0}^{-}$ and $\chi(m_{1}:E)$ be the Holevo quantity in Eq.~(\ref{hol1}).
If $p>q$ then
\begin{equation}
\chi(m_{1}:E) \leq -\alpha^2\log\alpha^2-\beta^2\log\beta^2
-2\alpha\beta\log\alpha\beta-h\left(\alpha\right), 
\label{lem1_holevo}
\end{equation}
where $\alpha=\frac{1}{2}(1-p)$ and $\beta=\frac{1}{2}(1+p)$.
\label{lem4}
\end{lemma}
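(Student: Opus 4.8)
The plan is to read Eq.~(\ref{lem1_holevo}) as an optimization statement: we fix $p=\lambda_0^+-\lambda_0^-$, regard $\chi(m_1:E)$ as the function of the $\lambda$'s given by Eq.~(\ref{hol1}), and bound it by its maximum over all admissible GHZ-diagonal states, i.e.\ over $\lambda_0^\pm\ge 0$ and $\lambda_j\ge 0$ with $\lambda_0^++\lambda_0^-+2\sum_j\lambda_j=1$ and $\lambda_0^+-\lambda_0^-=p$; the claim is that this maximum equals the right-hand side of Eq.~(\ref{lem1_holevo}). The first step is to rewrite Eq.~(\ref{hol1}) using $f(u,v)\equiv-u\log u-v\log v+(u+v)\log(u+v)$ (with $f(0,0)=0$), which is the perspective $(u+v)\,h\!\bigl(\tfrac{u}{u+v}\bigr)$ of the binary entropy and hence concave: Eq.~(\ref{hol1}) becomes
\begin{equation*}
\chi(m_1:E)=f\!\bigl(\lambda_0^+,\lambda_{2^{N-1}-1}\bigr)+f\!\bigl(\lambda_0^-,\lambda_{2^{N-1}-1}\bigr)+2\sum_{i=1}^{2^{N-2}-1}f\!\bigl(\lambda_i,\lambda_{t_i}\bigr),
\end{equation*}
with $t_i=2^{N-1}-1-i$ as in Eq.~(\ref{hol1}).

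The second step reduces the problem to three variables. Since $h\le 1$ we have $f(u,v)\le u+v$, so, abbreviating $a=\lambda_0^+$, $b=\lambda_0^-$, $c=\lambda_{2^{N-1}-1}$,
\begin{equation*}
\chi(m_1:E)\le G(a,b,c)\equiv f(a,c)+f(b,c)+1-a-b-2c,
\end{equation*}
and it suffices to maximize $G$ over the compact convex slice $R=\{a-b=p,\ a,b,c\ge 0,\ a+b+2c\le 1\}$. Using $\partial_af(a,c)=\log(1+c/a)$ and $\partial_cf(a,c)=\log(1+a/c)$, the stationarity conditions for a maximum in the relative interior of $R$ (where $c>0$) are $\partial_cG=0$ and $\partial_aG+\partial_bG=0$, i.e.\ $\log(1+a/c)+\log(1+b/c)=2$ and $\log(1+c/a)+\log(1+c/b)=2$; these give $(a+c)(b+c)=4ab=4c^2$, hence $c^2=ab$ and $a+b=2c$, which forces $a=b$ and contradicts $a-b=p>0$. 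Therefore the maximum of $G$ is attained on the relative boundary of $R$, i.e.\ on one of the faces $\{b=0\}$, $\{c=0\}$, $\{a+b+2c=1\}$ (note $a=b+p>0$ throughout, so $\{a=0\}$ never occurs).

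The third step is the face-by-face analysis. On $\{c=0\}$ both $f$-terms vanish, so $G=1-a-b=1-2b-p\le 1-p$, and $1-p=2\alpha\le h(\alpha)$ for $\alpha\in[0,\tfrac12]$ since $h(\alpha)-2\alpha$ is concave and vanishes at the endpoints; so this face is dominated by the target bound. On $\{a+b+2c=1\}$ the affine part of $G$ vanishes, and with $a=\tfrac12(1-2c+p)$, $b=\tfrac12(1-2c-p)$, $c\in[0,\alpha]$ the map $c\mapsto f(a,c)+f(b,c)$ is concave (composition of the concave $f$ with affine maps, summed) with derivative $\log\tfrac{ab}{c^2}$, so its maximum is at the solution of $c^2=ab$, namely $c=\alpha\beta$, where $a=\beta^2$, $b=\alpha^2$ and $G$ equals $-\alpha^2\log\alpha^2-\beta^2\log\beta^2-2\alpha\beta\log\alpha\beta-h(\alpha)$, exactly the right-hand side of Eq.~(\ref{lem1_holevo}). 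On $\{b=0\}$ (so $a=p$), $G=f(p,c)+1-p-2c$ with $c\in[0,\alpha]$ and $\partial_cG=\log(1+p/c)-2>0$ whenever $c<p/3$; since $p>q>\tfrac35$ (the defining equation for $q$ is equivalent to $h(\tfrac12(1-q))=\tfrac12$, whence $q>\tfrac35$ because $h$ is increasing on $[0,\tfrac12]$ and $h(\tfrac15)>\tfrac12$), we get $\alpha=\tfrac12(1-p)<p/3$, so $G$ is increasing on $[0,\alpha]$ and its maximum is at $c=\alpha$, i.e.\ at the point $(p,0,\alpha)$, which lies on $\{a+b+2c=1\}$ and is hence already bounded by Eq.~(\ref{lem1_holevo}). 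Collecting the three faces gives $\chi(m_1:E)\le G\le$ the claimed value.

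I expect the face $\{b=0\}$ to be the main obstacle: unlike the other two it is not automatically pushed to a vertex, and its interior critical point $c=p/3$ is feasible when $p$ is small — this is the only place the hypothesis $p>q$ (via $q>\tfrac35$) is genuinely used. As a cross-check, since $(\alpha^2,\alpha\beta,\alpha\beta,\beta^2)$ is the product distribution $(\alpha,\beta)^{\otimes 2}$ one has $-\alpha^2\log\alpha^2-\beta^2\log\beta^2-2\alpha\beta\log\alpha\beta=2h(\alpha)$, so the bound is simply $h(\alpha)$; combined with Lemma~\ref{lem3} this makes $K_\mathrm{DW}=1-2h(\tfrac12(1-p))$ positive precisely for $p>q$, consistently with the defining equation of $q$.
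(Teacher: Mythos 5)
Your proof is correct and follows essentially the same route as the paper's: the same concavity bound $-\lambda_{i}\log\lambda_{i}-\lambda_{t_{i}}\log\lambda_{t_{i}}+(\lambda_{i}+\lambda_{t_{i}})\log(\lambda_{i}+\lambda_{t_{i}})\leq\lambda_{i}+\lambda_{t_{i}}$ reduces $\chi(m_{1}:E)$ to a function of $(\lambda_{0}^{+},\lambda_{0}^{-},\lambda_{2^{N-1}-1})$, and the same constrained optimization (no interior critical point, maximum on the boundary at $(\beta^{2},\alpha^{2},\alpha\beta)$) yields the stated bound. The only difference is that you carry out explicitly the boundary analysis the paper leaves as ``simple calculations,'' including the useful observation that the hypothesis $p>q$ enters only on the face $\lambda_{0}^{-}=0$ via $q>3/5$.
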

\begin{proof}
By concavity of $f(x)\equiv -x\log x$, 
\begin{equation}
\,-\lambda_{i}\log\lambda_{i}-\lambda_{t_{i}}\log\lambda_{t_{i}}
+(\lambda_{i}+\lambda_{t_{i}})\log(\lambda_{i}+\lambda_{t_{i}})
\leq \lambda_{i}+\lambda_{t_{i}} \nonumber
\end{equation}
for $1 \leq i \leq 2^{N-2}-1$.
Thus the Holevo quantity $\chi(m_{1}:E)$ is upper bounded by 
\begin{eqnarray}
&&\,-\lambda_{0}^{+}\log\lambda_{0}^{+}-\lambda_{0}^{-}\log\lambda_{0}^{-}-2\lambda_{2^{N-1}-1}\log\lambda_{2^{N-1}-1} \nonumber\\
&&\quad+\,(\lambda_{0}^{+}+\lambda_{2^{N-1}-1})\log(\lambda_{0}^{+}+\lambda_{2^{N-1}-1})+(\lambda_{0}^{-}+\lambda_{2^{N-1}-1})\log(\lambda_{0}^{-}+\lambda_{2^{N-1}-1}) \nonumber\\
&&\quad+\,1-(\lambda_{0}^{+}+\lambda_{0}^{-}+2\lambda_{2^{N-1}-1}). \nonumber
\end{eqnarray}
Let 
$C\equiv \left\{(x,y):x \geq 0, y \geq 0, x+y \leq \frac{1}{2} (1-p) \right\}$. 
Define $\tau(x,y)$ on $C$ by
\begin{eqnarray}
\tau(x,y) &\equiv& -(p+x)\log(p+x)-x\log x-2y\log y \nonumber\\
&&\quad+\,(p+x+y)\log(p+x+y)+(x+y)\log(x+y) \nonumber\\
&&\quad+\,(1-p)-2(x+y). \nonumber
\end{eqnarray}
Then $\tau$ is continuous on the compact set $C$ 
and hence $\tau$ has a maximum value in $C$. 
Since
\begin{eqnarray}
&&\frac{\partial}{\partial x}\tau(x,y)= \log\frac{(x+y)(p+x+y)}{4x(p+x)} \nonumber\\
&&\frac{\partial}{\partial y}\tau(x,y)= \log\frac{(x+y)(p+x+y)}{4y^{2}}, \nonumber
\end{eqnarray}
there is no critical point in the interior of $C$,
and hence $\tau$ has the maximum on the boundary of $C$.
From simple calculations,
we can see that if $p>q$ 
then $\tau(x,y)$ has its maximum value
\begin{equation}
-\alpha^2\log\alpha^2-\beta^2\log\beta^2
-2\alpha\beta\log\alpha\beta-h\left(\alpha\right) \nonumber
\end{equation}
at $(\alpha^2,-\alpha^2+\alpha)$ on the boundary of $C$. 
Therefore, if $p>q$ then
\begin{equation}
\chi(m_{1}:E) \leq -\alpha^2\log\alpha^2-\beta^2\log\beta^2
-2\alpha\beta\log\alpha\beta-h\left(\alpha\right). \nonumber
\end{equation}
\end{proof}
Combining Eqs.~(\ref{eq:MI1}) and (\ref{lem1_holevo}),
if $p>q$ then we obtain a lower bound of $K_\mathrm{DW}$:
\begin{equation}
\label{DW1}
K_\mathrm{DW} \geq 1+\alpha^2\log\alpha^2+\beta^2\log\beta^2
+2\alpha\beta\log\alpha\beta.
\end{equation}
Let $\widetilde{\tau}(p) \equiv 
1+\alpha^2\log\alpha^2+\beta^2\log\beta^2
+2\alpha\beta\log\alpha\beta$ 
be the right-hand side in Eq.~(\ref{DW1}). 
Then $\widetilde{\tau}$ is a strictly increasing function on [0.5,1]
and $\widetilde{\tau}(q)=0$. 
Thus $K_\mathrm{DW} \geq \widetilde{\tau}(p)>\widetilde{\tau}(q)=0$
if $p > q$.

We note that Proposition~\ref{prop1} means 
$p \approx S_n/n$ for sufficiently large $n$.
In addition, Proposition~\ref{prop2} implies that
for sufficiently large $n$,
if $S_n/n > q$ then $\widetilde{S}_{n}/n > q$ with high probability.
Therefore, in this case,
if $N$ players pass the Security check step,
they can have a secret key for secret sharing with positive key rate
in asymptotic case by means of our QSS protocol.

\subsection{Case 2: there are dishonest players}
\label{SP2}

In this subsection, the case that there are dishonest players is taken into consideration.
As in Sect.~\ref{SP1}, we find an upper bound of $\chi(m_{1}:m_{2}\cdots m_{k+1}E)$,
which can be represented by $\lambda_{0}^{+}-\lambda_{0}^{-}$.
 
\begin{lemma}
\label{lem5}
Let $p\equiv\lambda_{0}^{+}-\lambda_{0}^{-}$.
If $p>q$ then
\begin{equation}
\chi(m_{1}:m_{2}\cdots m_{k+1}E) \leq -\alpha^2\log\alpha^2-\beta^2\log\beta^2
-2\alpha\beta\log\alpha\beta-h\left(\alpha\right), \nonumber
\end{equation}
where $\alpha=\frac{1}{2}(1-p)$ and $\beta=\frac{1}{2}(1+p)$.
\label{lemma2_holevo}
\end{lemma}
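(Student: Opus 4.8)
The plan is to reduce Case~2 to Case~1, by showing that once the dishonest players have performed their (honest-protocol) measurements, Eve's residual system purifies an $(N-k)$-qubit GHZ diagonal state (up to a local phase gate on the dealer's qubit) carrying the same key parameter $p=\lambda_{0}^{+}-\lambda_{0}^{-}$. Since we assume the dishonest players follow the protocol, $A_{2},\dots,A_{k+1}$ measure their qubits in the $X$ or $Y$ basis, producing classical outcomes $\vec m=(m_{2},\dots,m_{k+1})$. Because $\rho_{N}$ is GHZ diagonal every single-qubit reduced state is maximally mixed, so $\vec m$ is uniformly distributed, and likewise $m_{1}$ is uniform and independent of $\vec m$ (the dealer's qubit stays maximally mixed after the dishonest measurements). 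The chain rule for the mutual information of a $c$-$cq$ state then yields $\chi(m_{1}:m_{2}\cdots m_{k+1}E)=\sum_{\vec m}2^{-k}\,\chi(m_{1}:E\mid\vec m)$, so it is enough to bound each conditional term.

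The first technical step is to compute, from the purification $\ket{\Psi}_{A_{1}\cdots A_{N}E}$ of Sect.~\ref{SP1}, the action of a single-qubit $X$- or $Y$-measurement of a non-dealer qubit on the GHZ basis. An $X$-measurement of $A_{i}$ with outcome $m_{i}$ sends $\ket{\Psi_{j}^{+}}$ to $\ket{\Psi_{j'}^{(-1)^{m_{i}}}}$ and $\ket{\Psi_{j}^{-}}$ to $\ket{\Psi_{j'}^{-(-1)^{m_{i}}}}$ up to a global phase, where $j'$ is $j$ with the $i$th bit deleted; a $Y$-measurement does the same up to an extra single-qubit phase gate $\operatorname{diag}(1,\mp i)$, which may be pushed onto the dealer's qubit. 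Collecting terms, conditioned on $\vec m$ the system $A_{1}(\mathrm{honest})E$ purifies $\tilde\rho^{\vec m}=V_{\vec m}\,\tilde\sigma^{\vec m}\,V_{\vec m}^{\dagger}$, where $V_{\vec m}$ is a phase gate on the dealer's qubit and $\tilde\sigma^{\vec m}$ is a genuine GHZ diagonal state on the $N-k$ non-dishonest qubits. The essential observation is that every $\lambda_{j}$ with $j\ge 1$ whose GHZ components merge into the all-zero block contributes equally to $\tilde\lambda_{0}^{+}$ and $\tilde\lambda_{0}^{-}$ (since $\ket{\Psi_{j}^{+}}$ and $\ket{\Psi_{j}^{-}}$ have the same weight $\lambda_{j}$ and fall into opposite sign sectors), so these contributions cancel and $|\tilde\lambda_{0}^{+}-\tilde\lambda_{0}^{-}|=|\lambda_{0}^{+}-\lambda_{0}^{-}|=p$.

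It remains to bound $\chi(m_{1}:E\mid\vec m)$. Since $V_{\vec m}$ is a phase gate on the dealer's qubit it merely rotates the dealer's $X$- or $Y$-measurement into another measurement in the equatorial plane, and for a GHZ diagonal state $\chi(m_{1}:E)$ is independent of which equatorial basis the dealer uses (a local phase rotation on $A_{1}$ becomes a local phase rotation on $A_{2}\cdots A_{N}$ of the post-measurement state, which is erased when those systems are traced out). Hence $\chi(m_{1}:E\mid\vec m)$ equals the Case~1 Holevo quantity of Eq.~(\ref{hol1}) evaluated for $\tilde\sigma^{\vec m}$, and Lemma~\ref{lem4} applies because $|\tilde\lambda_{0}^{+}-\tilde\lambda_{0}^{-}|=p>q$ and the right-hand side of Eq.~(\ref{lem1_holevo}) is symmetric under $p\mapsto -p$ (i.e.\ under $\alpha\leftrightarrow\beta$). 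Each conditional term is therefore at most $-\alpha^{2}\log\alpha^{2}-\beta^{2}\log\beta^{2}-2\alpha\beta\log\alpha\beta-h(\alpha)$, and averaging over $\vec m$ via the identity of the first paragraph gives the claim.

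I expect the main obstacle to be the bookkeeping in the second step: tracking the phases produced by the dishonest players' single-qubit measurements --- $\pm 1$ for $X$, $\pm i$ for $Y$ --- carefully enough to be certain that what remains is GHZ diagonal up to a phase gate on the dealer's qubit and that $|\lambda_{0}^{+}-\lambda_{0}^{-}|$ is preserved exactly. It is worth noting why a cruder route fails: one cannot simply hand the dishonest players' full qubit systems to the adversary and bound $\chi(m_{1}:m_{2}\cdots m_{k+1}E)\le\chi(m_{1}:A_{2}\cdots A_{k+1}E)$, since the latter can exceed the target bound already for $N=3$, $k=1$; it is essential that the dishonest players retain only their classical outcomes.
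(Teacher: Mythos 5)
Your proposal is correct, and it reaches the paper's conclusion by a genuinely different route. The paper's own proof is a direct computation: it reindexes the GHZ basis as $\ket{\Psi_{t,s}^{\pm}}$ with $t$ labelling the dishonest block, evaluates $\chi(m_{1}:m_{2}\cdots m_{k+1}E)$ by "tedious but straightforward calculations" to obtain a formula identical in shape to Eq.~(\ref{hol1}) but with $\lambda_{0}^{\pm}$ replaced by $\zeta^{\pm}(0)=\lambda_{0,0}^{\pm}+\sum_{t\neq 0}\lambda_{t,0}$ and $\lambda_{s}$ by $\zeta(s)=\sum_{t}\lambda_{t,s}$, and then reruns the concavity-plus-optimization argument of Lemma~\ref{lem4}, using $\zeta^{+}(0)-\zeta^{-}(0)=p$. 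Your conditioning argument derives exactly the same intermediate object structurally: your residual $(N-k)$-qubit GHZ diagonal state $\tilde\sigma^{\vec m}$ has parameters $\tilde\lambda_{0}^{\pm}=\zeta^{\pm}(0)$ and $\tilde\lambda_{s}=\zeta(s)$, and the chain rule $\chi(m_{1}:\vec m E)=\sum_{\vec m}2^{-k}\chi(m_{1}:E\mid\vec m)$ (valid here because $\rho_{A_{1}\cdots A_{k+1}}$ is computational-basis diagonal, so $m_{1}$ and $\vec m$ are independent) replaces the paper's explicit entropy bookkeeping. What your route buys is transparency: it explains structurally why the bound is independent of $k$ and why only $p=\lambda_{0}^{+}-\lambda_{0}^{-}$ survives (your cancellation observation for the merged $j\geq 1$ blocks is exactly the identity $\zeta^{+}(0)-\zeta^{-}(0)=\lambda_{0}^{+}-\lambda_{0}^{-}$). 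One caveat worth making explicit: your claim that $\chi(m_{1}:E)$ is invariant under rotating the dealer's measurement within the equatorial plane is \emph{false} for a general GHZ diagonal state (e.g.\ it already fails when $\lambda_{j}^{+}\neq\lambda_{j}^{-}$ for some $j\geq 1$); it holds here precisely because the form~(\ref{eq:DeRhoN}) forces $\lambda_{j}^{+}=\lambda_{j}^{-}$ for $j\geq 1$, so that $P_{\phi}$ on $A_{1}$ acts on $\rho_{N}$ the same way as $P_{\phi}$ on $A_{2}$, and the same symmetry must be verified for $\tilde\sigma^{\vec m}$ (it does hold, by the same computation that gives $\tilde\lambda_{s}^{+}=\tilde\lambda_{s}^{-}$ for $s\neq 0$). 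With that point nailed down, and noting that in fact $\tilde\lambda_{0}^{+}-\tilde\lambda_{0}^{-}=+p$ exactly (no absolute value or $p\mapsto -p$ symmetry is needed, since the $t=0$ block never flips sign), your argument is a complete and arguably cleaner proof of the lemma.
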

\begin{proof}
For ease of calculation,
we rewrite $\ket{\Psi_j^\pm}=\frac{1}{\sqrt{2}}\left(\ket{j}\pm\ket{2^{N}-1-j}\right)$ as 
\begin{equation}
\ket{\Psi_{t,s}^{\pm}}=\frac{1}{\sqrt{2}}\left(\ket{0}\ket{t}\ket{s}\pm\ket{1}\ket{2^{k}-1-t}\ket{2^{N-k-1}-1-s}\right), \nonumber
\end{equation}
where $0 \leq t \leq 2^{k}-1$, $0 \leq s \leq 2^{N-k-1}-1$. 
Then $\rho_{N}$ becomes
\begin{eqnarray}
\rho_N&=&\lambda_{0,0}^{+}\ket{\Psi_{0,0}^{+}}\bra{\Psi_{0,0}^{+}}+\lambda_{0,0}^{-}\ket{\Psi_{0,0}^{-}}\bra{\Psi_{0,0}^{-}} \nonumber \\
&&\quad+\,\sum_{(t,s)\neq(0,0)}\lambda_{t,s}
\left(\ket{\Psi_{t,s}^+}\bra{\Psi_{t,s}^+}+\ket{\Psi_{t,s}^-}\bra{\Psi_{t,s}^-}\right), \nonumber
\end{eqnarray}
where $\lambda_{0,0}^{+}=\lambda_{0}^{+}$, $\lambda_{0,0}^{-}=\lambda_{0}^{-}$ 
and $\lambda_{t,s}=\bra{\Psi_{t,s}^+}\rho_{N}\ket{\Psi_{t,s}^+}=\bra{\Psi_{t,s}^-}\rho_{N}\ket{\Psi_{t,s}^-}$ 
for $(t,s)\neq(0,0)$.
Thus the purification of $\rho_N$ can be written as
\begin{eqnarray}
\ket{\Psi}_{A_{1}A_{2}\cdots A_{N}E}&=&\sqrt{\lambda_{0,0}^{+}}\ket{\Psi_{0,0}^{+}}\ket{e_{0,0}^{+}}+\sqrt{\lambda_{0,0}^{-}}\ket{\Psi_{0,0}^{-}}\ket{e_{0,0}^{-}} \nonumber \\
&&\quad+\,\sum_{(t,s) \neq (0,0)}\sqrt{\lambda_{t,s}}
\left(\ket{\Psi_{t,s}^+}\ket{e_{{t,s}}^{+}}+\ket{\Psi_{t,s}^-}\ket{e_{t,s}^{-}}\right), \nonumber
\end{eqnarray}
where $\inn{e_{i,x}^{a}}{e_{j,y}^{b}}=\delta_{ij}\delta_{xy}\delta_{ab}$. 
From tedious but straightforward calculations,
$\chi(m_{1}:m_{2} \cdots m_{k+1}E)$ becomes
\begin{eqnarray}
\chi(m_{1}:m_{2} \cdots m_{k+1}E)
&=&-\zeta^{+}(0)\log\zeta^{+}(0)-\zeta^{-}(0)\log\zeta^{-}(0) \nonumber \\
&&\quad-\,2\sum_{s=1}^{2^{N-k-1}-1}\zeta(s)\log\zeta(s) \nonumber\\
&&\quad+\,(\zeta^{+}(0)+\zeta(2^{N-k-1}-1))\log(\zeta^{+}(0)+\zeta(2^{N-k-1}-1)) \nonumber \\
&&\quad+\,(\zeta^{-}(0)+\zeta(2^{N-k-1}-1))\log(\zeta^{-}(0)+\zeta(2^{N-k-1}-1)) \nonumber \\
&&\quad+\,2\sum_{s=1}^{2^{N-k-2}-1}((\zeta(s)+\zeta(2^{N-k-1}-1-s)) \nonumber \\
&&\qquad\cdot\log(\zeta(s)+\zeta(2^{N-k-1}-1-s))), \nonumber
\end{eqnarray}
where 
$\zeta^{\pm}(0) \equiv \lambda_{0,0}^{\pm}+\sum_{t=1}^{2^{k}-1}\lambda_{t,0}$ 
and $\zeta(s) \equiv \sum_{t=0}^{2^{k}-1}\lambda_{t,s}$ for $1 \leq s \leq 2^{N-k-1}-1$.
Hence, by concavity of $f(x)\equiv -x\log x$,
we can have
\begin{eqnarray}
\chi(m_{1}:m_{2} \cdots m_{k+1}E)
&\leq&\,-\,\zeta^{+}(0)\log\zeta^{+}(0)-\zeta^{-}(0)\log\zeta^{-}(0) \nonumber \\
&&\,-\,2\zeta(2^{N-k-1}-1)\log\zeta(2^{N-k-1}-1) \nonumber\\
&&\,\,+\,(\zeta^{+}(0)+\zeta(2^{N-k-1}-1))\log(\zeta^{+}(0)+\zeta(2^{N-k-1}-1)) \nonumber \\
&&\,+\,(\zeta^{-}(0)+\zeta(2^{N-k-1}-1))\log(\zeta^{-}(0)+\zeta(2^{N-k-1}-1)) \nonumber \\
&&\,+\,1-(\zeta^{+}(0)+\zeta^{-}(0)+2\zeta(2^{N-k-1}-1)). \nonumber
\end{eqnarray}

By applying similar logic in Lemma~\ref{lem4}, 
if $p>q$ then we can obtain
\begin{equation}
\chi(m_{1}:m_{2} \cdots m_{k+1}E) \leq -\alpha^2\log\alpha^2-\beta^2\log\beta^2
-2\alpha\beta\log\alpha\beta-h\left(\alpha\right). \nonumber
\end{equation}

\end{proof}
From Lemmas~\ref{lem3} and \ref{lem5}, 
if $p>q$ then we can have
\begin{equation}
\label{DW2}
K_\mathrm{DW} \geq 1+\alpha^2\log\alpha^2+\beta^2\log\beta^2
+2\alpha\beta\log\alpha\beta. 
\end{equation}
By the same reason as in Sect.~\ref{SP1},
it can be shown that $N$ players can gain a secret key for secret sharing by means of our protocol
if they pass the Security check step, 
even if there are dishonest players.

In both cases, violation of our inequality implies 
that a secret key for secret sharing can be obtained 
for sufficiently large $n$ with high probability.
One note here is that 
the lower bound of $K_\mathrm{DW}$ does not depend on 
the number of trusted players and the number of dishonest players.
Therefore, if there is at least one trusted player except for dealer,
then the dealer and the trusted players can obtain a secret key 
by using our QSS protocol in asymptotic case.

\section{Example: the Werner state}
\label{Example}
\begin{figure}
\includegraphics[width=.9\linewidth,trim=0cm 0cm 0cm 0cm]{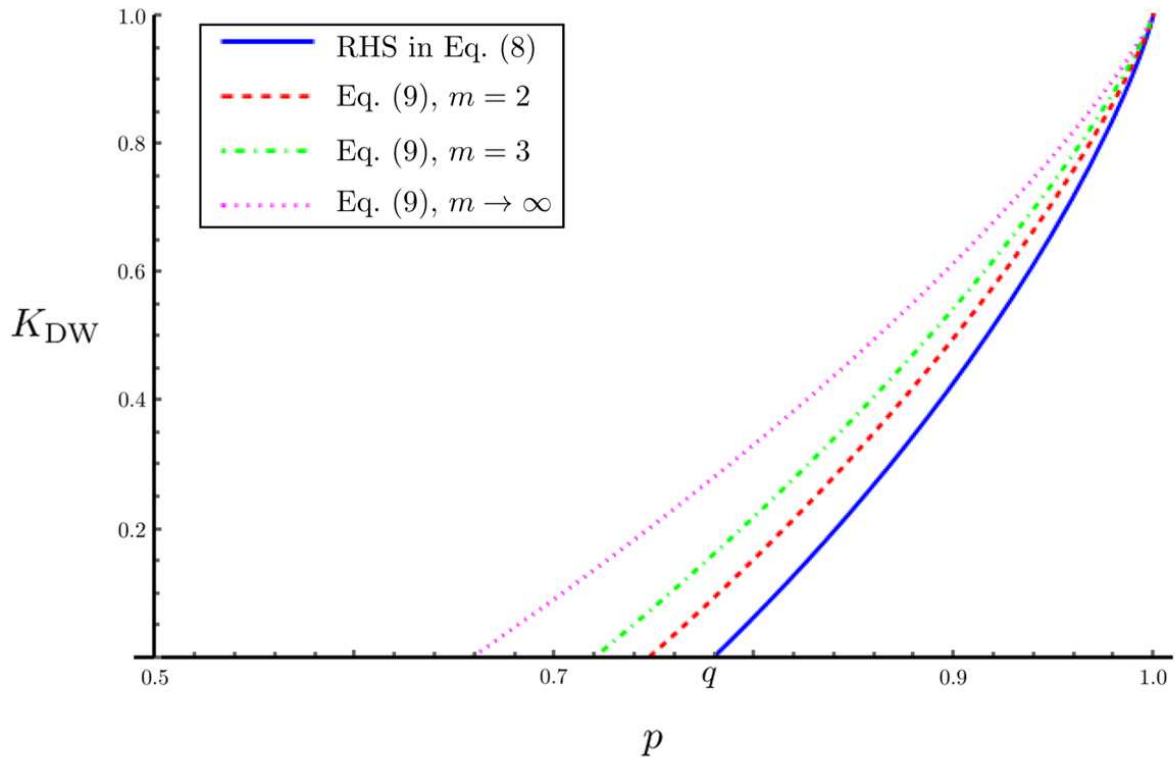}
\caption{\label{Fig:2}
Our lower bound of the asymptotic key rate~(solid line):
The lower bound is not dependent on the number of trusted players, 
so the dealer only checks 
that the estimated value of $p$ 
is larger than the constant $q$ in the Security check step of our protocol
to confirm that secret sharing can be securely performed.
The dashed line and the dash-dotted line represent
the Devetak--Winter key rate in Eq.~(\ref{DW4})
when $m$, the number of trusted players, is 2 and 3, respectively,
if $N$ players share the Werner state.
In this example, if the dealer knows $m$, 
he/she can properly adjust the value $q$ in our protocol
according to $m$.
The dotted line indicates Eq.~(\ref{DW4}) 
when $m$ goes to infinity.
This implies that there is a limit on the value of $p$ for performing our secure QSS, 
even though $m$ is large enough.
}
\label{figure1}
\end{figure}

Suppose that the depolarized state is the Werner state~\cite{W89}
\begin{equation}
\hat{\rho}_{A_{1}A_{2}\cdots A_{N}}
=p\ket{\Psi_{0}^{+}}\bra{\Psi_{0}^{+}}
+\frac{(1-p)}{2^{N}}I_{A_{1}A_{2}\cdots A_{N}}. \nonumber
\end{equation}
Then the Holevo quantity $\chi$ can be represented 
by $p$ and the number of trusted players,
and so the Devetak--Winter key rate $K_\mathrm{DW}$ becomes
\begin{eqnarray}
\label{DW4}
K_\mathrm{DW}&=&1-h \left(\frac{1-p}{2}\right)
-h(T_{m})-(1-T_{m})\log(2^{m}-1) \nonumber \\
&&\quad+\,h(T_{m-1})+(1-T_{m-1})\log(2^{m-1}-1),
\end{eqnarray}
where $T_{j}=(1+(2^{j}-1)p)/2^{j}$ and $m$ is the number of trusted players~($2 \le m \le N$).
Hence, $K_\mathrm{DW}$ depends on the number of trusted players,
and we can see that the key rate is a strictly increasing function for $m$.

As seen in FIG.~\ref{figure1}, 
the Devetak--Winter key rates in this example 
are all greater than the lower bound 
calculated in Sect.~\ref{SecurityProof}.
Therefore, in this example, we can easily see that 
if $N$ players pass the Security check step,  
they can have a secret key for secret sharing by using our QSS protocol.

\section{Conclusion}
\label{conclusion}
We have introduced 
an $(N-1,N-1)$ threshold QSS protocol 
on a state close to the $N$-qubit GHZ state.
The inequality used in our QSS protocol 
is derived from 
the Mermin inequality,
and by using our inequality,
$N$ players in the protocol can check whether
distributed key bits have secure correlation for secret sharing.
We have found lower bound on its asymptotic key rate
and have shown that
our QSS protocol is secure in asymptotic case
by employing the lower bound.

In a device-independent scenario for QKD, 
to verify the security of a QKD protocol, 
two remote players compute the average with respect to the Bell operator
by using their local measurement outcomes only.
In our protocol, $N$ players 
can calculate the value of $S_n$ by using their local measurement outcomes only,
and can check whether the value of $S_n$ satisfies the inequality derived from the Mermin inequality.
Furthermore, it can be shown that 
if $\rho_N$ is the depolarized state of $\rho$ in our protocol
then $\Delta \equiv tr\left(\rho_N \left(\ket{\Psi_{0}^{+}}\bra{\Psi_{0}^{+}}-\ket{\Psi_{0}^{-}}\bra{\Psi_{0}^{-}}\right)\right)=tr\left(\rho \left(\ket{\Psi_{0}^{+}}\bra{\Psi_{0}^{+}}-\ket{\Psi_{0}^{-}}\bra{\Psi_{0}^{-}}\right)\right)$, 
which are the averages with respect to the Mermin operator 
and $\Delta \approx S_n$ for sufficiently large $n$.
Therefore, for arbitrary $N$-qubit state $\rho$,
we could obtain 
a device-independent QSS protocol
from our QSS protocol.
In addition, 
since an important theorem related 
to the device-independent QKD~\cite{EATDI},
called the entropy accumulation theorem~\cite{EAT},
was recently introduced, it can be an interesting topic 
to find how to apply the theorem to
our QSS protocol in order to obtain a device-independent QSS protocol.

\begin{acknowledgements}
This research was supported by Basic Science Research Program 
through the National Research Foundation of Korea (NRF) funded 
by the Ministry of Science and ICT (NRF-2016R1A2B4014928).
\end{acknowledgements}

\bibliography{reference}

\end{document}